\documentclass[runningheads]{llncs2}

\usepackage[margin=1.3in]{geometry}
\usepackage[utf8]{inputenc}
\usepackage{microtype}

\usepackage{amsmath}
\usepackage{amsfonts}
\usepackage{amssymb}



\usepackage{todonotes}
\usepackage{enumitem}
\usepackage{multicol}
\usepackage{hyperref}
\usepackage{cleveref}
\usepackage{autonum}

\allowdisplaybreaks

\newcommand{\eps}{\varepsilon}

\usepackage{caption}
\usepackage{subcaption}

\newcommand{\ceil}[1]{\lceil #1\rceil}
\newcommand{\floor}[1]{\lfloor #1\rfloor}

\begin{document}

\title{On the Problem of $p_1^{-1}$ in Locality-Sensitive Hashing}

\author{Thomas Dybdahl Ahle\orcidID{0000-0001-9747-0479}}

\authorrunning{T. Ahle}


\institute{
   IT University of Copenhagen\\
   \email{thdy@itu.dk}\\
   \url{http://thomasahle.com}
}

\maketitle

\begin{abstract}
A Locality-Sensitive Hash (LSH) function is called $(r,cr,p_1,p_2)$-sensitive, if two data-points with a distance less than $r$ collide with probability at least $p_1$ while data points with a distance greater than $cr$ collide with probability at most $p_2$.
These functions form the basis of the successful Indyk-Motwani algorithm (STOC 1998) for nearest neighbour problems.
In particular one may build a $c$-approximate nearest neighbour data structure with query time $\tilde O(n^\rho/p_1)$ where $\rho=\frac{\log1/p_1}{\log1/p_2}\in(0,1)$.
That is, \emph{sub-linear time}, as long as $p_1$ is not too small.
This is significant since most high dimensional nearest neighbour problems suffer from the curse of dimensionality, and can't be solved \emph{exact}, faster than a brute force \emph{linear-time} scan of the database.

Unfortunately, the best LSH functions tend to have very low collision probabilities, $p_1$ and $p_2$.
Including the best functions for Cosine and Jaccard Similarity.
This means that the $n^\rho/p_1$ query time of \emph{LSH is often not sub-linear after all}, even for approximate nearest neighbours!

In this paper, we improve the general Indyk-Motwani algorithm to reduce the query time of LSH to $\tilde O(n^\rho/p_1^{1-\rho})$ (and the space usage correspondingly.)
Since $n^\rho p_1^{\rho-1} < n \Leftrightarrow p_1 > n^{-1}$,
our algorithm always obtains sublinear query time, for any collision probabilities at least $1/n$.
For $p_1$ and $p_2$ small enough, our improvement over all previous methods can be \emph{up to a factor $n$} in both query time and space.

The improvement comes from a simple change to the Indyk-Motwani algorithm, which can easily be implemented in existing software packages.

\keywords{locality-sensitive hashing \and nearest neighbor \and similarity search}
\end{abstract}

\section{Introduction}
Locality Sensitive-Hashing (LSH) framework~\cite{indyk1998approximate} is one of the most efficient approaches to the nearest neighbour search problem in high dimensional spaces.
It comes with theoretical guarantees,
and it has the advantage of easy adaption to nearly any metric or similarity function one might want to search.

The $(r_1,r_2)$-near neighbour problem is defined as follows:
Given a set $X$ of points, we build a data-structure, such that given a query, $q$
we can quickly find a point $x\in X$ with distance $<r_2$ to $q$, or determine that $X$ has no points with distance $\le r_1$ to $q$.
Given a solution to this ``gap'' problem, one can obtain a $r_1/r_2$-approximate nearest neighbour data structure, or even an exact%
\footnote{In general we expect the exact problem to be impossible to solve in sub-linear time, given the hardness results of~\cite{ahle2015complexity,abboud2017distributed}.
However for practical datasets it is often possible.}
solution using known reductions~\cite{ahle2017parameter,christiani2018confirmation,datar2004locality}.

For any measure of similarity, the gap problem can be solved by LSH:
we find a distribution of functions $H$, such that $p_1\ge \Pr_{h\sim H}[h(x)=h(y)]$ when $x$ and $y$ are similar, and $p_2\le \Pr_{h\sim H}[h(x)=h(y)]$ when $x$ and $y$ are dissimilar.
Such a distribution is called $(r_1,r_2,p_1,p_2)$-sensitive.
If $p_1>p_2$ the LSH framework gives a data-structure with query time $\tilde O(n^\rho/p_1)$ for $\rho= \frac{\log1/p_1}{\log1/p_2}$, which is usually significantly faster than the alternatives.

\vspace{.5em}
\emph{At least when $p_1$ is not too small.}

\clearpage

The two most common families of LSH is Cross-Polytope (or Spherical) LSH~\cite{andoni2015practical} for Cosine similarity and MinHash~\cite{broder1998min,broder1997resemblance} for Jaccard Similarity.

Cross-Polytope is the basis of the Falconn software package~\cite{razenshteyn2018falconn},
and solves the $(r,cr)$-near neighbour problem on the sphere in time $\tilde O(n^{1/c^2}/p_1)$.
Here $p_1 = d^{-\frac{\tau^2}{4-\tau^2}}(\log d)^{-\Omega(1)}$,
where $\tau=\|p-q\|_2\in[0,2]$ is the distance between two close points.
We see that already at $\tau\approx\sqrt{2}$ (which corresponds to near orthogonal vectors) the $1/p_1$ factor results in a factor $d$ slow-down.
For larger $\tau\in(\sqrt{2},2]$ the slow-down can grow arbitrary large.
Using dimensionality reduction techniques, like the Johnson Lindenstrauss transform, one may assume $d=\eps^{-2}\log n$ at the cost of a factor $1+\eps$ distortion of the distances.
However if $\eps$ is just $1/100$, the slow-down factor of $d$ is still worse than, say, $n^{1/2}$ for datasets of size up to $10^8$, and so if $c\le \sqrt{2}$ we get that $n^\rho/p_1$ s larger than $n$.
So \emph{worse than a brute force scan of the database}!

The MinHash algorithm was introduced by Broder et al. for the Alta Vista search engine, but is used today for similarity search on sets in everything from natural language processing to gene sequencing.
MinHash solves the $(j_1,j_2)$ gap similarity search problem, where $j_1\in(0,1)$ is the Jaccard Similarity of similar sets, and $j_2$ is that of dissimilar sets, in time $\tilde O(n^\rho/j_1)$ where $\rho= \frac{\log1/j_1}{\log1/j_2}$.
(In particular MinHash is $(j_1,j_2,j_1,j_2)$-sensitive in the sense defined above.)
Now consider the case $j_1=n^{-1/4}$ and $j_2=n^{-3/10}$.
This is fairly common as illustrated in \cref{fig1}.
In this case $\rho = \frac{\log 1/j_1}{\log 1/j_2} = 5/6$,
so we end up with $n^\rho/j_1 = n^{13/12}$.
Again \emph{worse than a brute force scan of the database}!

\vspace{1em}

In this paper we reduce the query time of LSH to $n^\rho/p_1^{1-\rho}$, which is less than $n$ for all $p_1>1/n$.
In the MinHash example above, we get $n^\rho/p_1^{1-\rho} = n^{5/6 + 1/4 (1-5/6)} = n^{7/8}$.
More than a factor $n^{.208}$ improvement(!)
In general the improvement of $p_1^{-\rho}$ may be as large as a factor of $n$ when $p_1$ and $p_2$ are both close to $1/n$.
This is illustrated in \cref{fig2}.

\vspace{1em}

The improvements to LSH comes from a simple observation:
During the algorithm of Indyk and a certain ``amplification'' procedure has to be applied $\frac{\log n}{\log1/p_2}$ times.
When $\log1/p_2$ does not divide $n$, which is extremely likely, the amount of amplification has to be approximated by the next integer.
We propose instead an ensemble of LSH tables with different amounts of amplification, which when analysed sufficiently precisely yields the improvements described above.

\begin{figure}
   \centering
   \vspace*{-2cm}
   \begin{subfigure}[b]{\textwidth}
      \hspace*{-1cm}
      \centering\includegraphics[width=\textwidth]{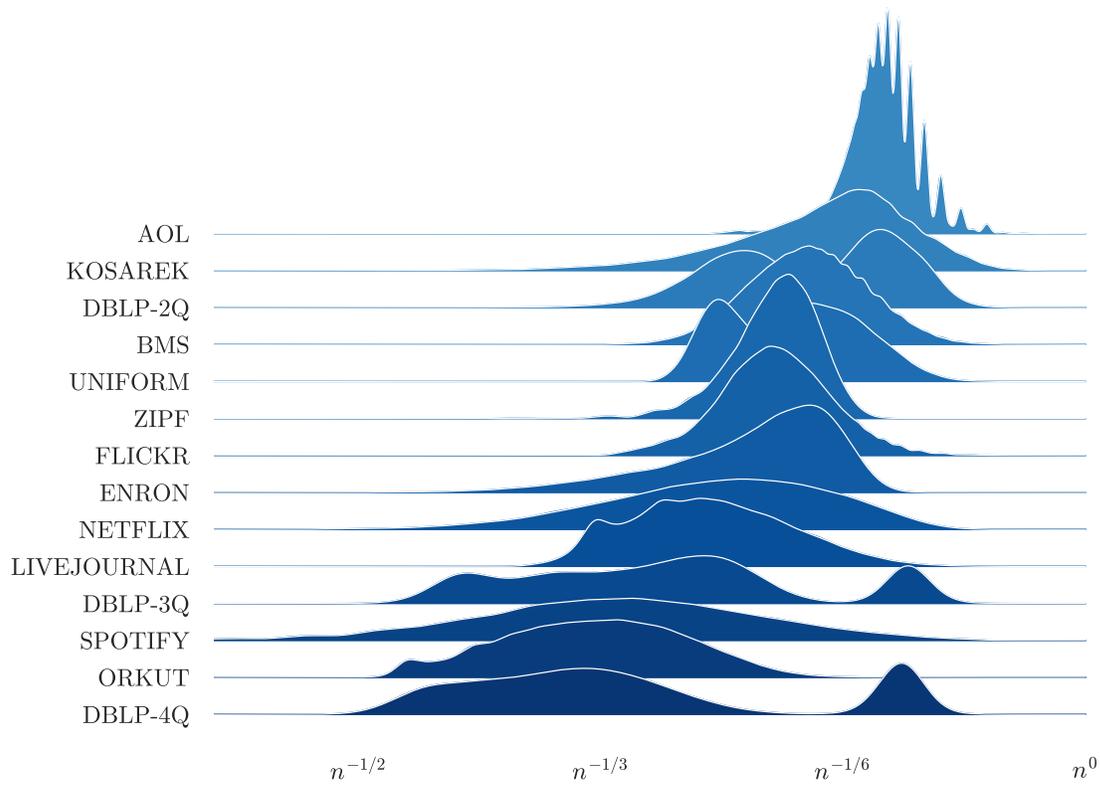}
      \caption{Density plots of pairwise Jaccard Similarities in the datasets studied by Mann et al.~\cite{mann2016empirical}.
         The similarities are normalized by the dataset sizes, so we can compare the effect of $1/p_1$ with the effect of $n^\rho$.
         We see that reasonable values for $j_1=p_1$ range between $n^{-1/3}$ and $n^{-1/6}$ on those datasets.
      }
      \label{fig1}
   \end{subfigure}
   \begin{subfigure}[b]{.8\textwidth}
      \centering\includegraphics[width=\textwidth]{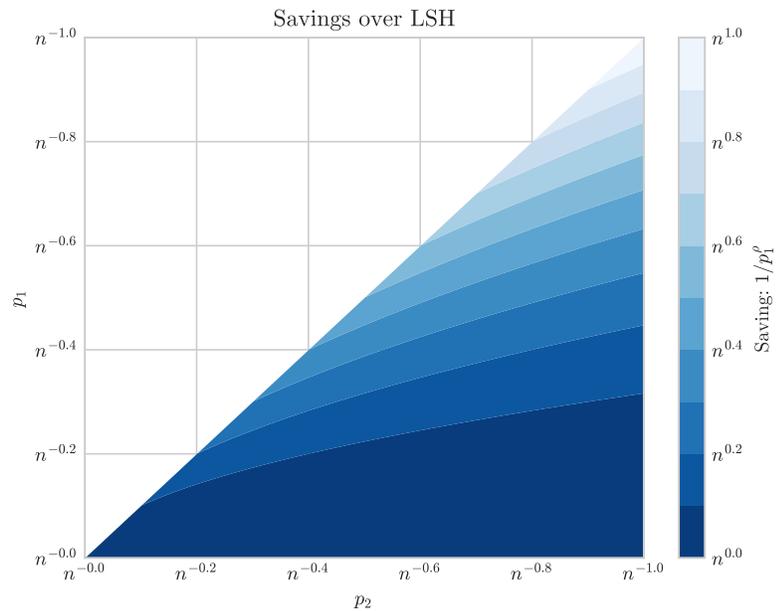}
      \caption{Saving possible, in query time and space, over classical LSH as a function of $p_1$ and $p_2$.
         With $p_1=n^{-1/4}$ and $p_2=n^{-1/3}$ we save a factor of $n^{3/16}=n^{.1875}$.
      }
      \label{fig2}
   \end{subfigure}
   \caption{Overview over available savings}
\end{figure}

\subsection{Related Work}

We will review various directions in which LSH has been improved and generalized, and how those results related to what is presented in the present article.

In many cases, the time required to sample and evaluate the hash functions dominate the time required by LSH.
Recent papers~\cite{dahlgaard2017fast,christiani2019fast} have
reduced the number of distinct calls to the hash functions which is needed.
The most recent paper in the line of work is~\cite{christiani2019fast}, which reduces the number of calls to $(\frac{\log n}{\log 1/p_2})^2/p_1$.
On top of that, however, they still require $n^{\rho}/p_1$ work, so the issue with small $p_1$ isn't touched upon.
In fact, some of the some of the algorithms in~\cite{christiani2019fast} \emph{increase} the dependency from $n^\rho/p_1$ to $n^\rho/(p_1-p_2)$.

Other work has sought to generalize the concept of Locality Sensitive Hashing to so-called Locality Sensitive Filtering, LSF~\cite{becker2016new}.
However, the best work for set similarity search based on LSF~\cite{ahle2020subsets,tobias2016} still have factors similar in spirit to $p_1^{-1}$.
E.g., the Chosen Path algorithm in~\cite{tobias2016} uses query time $\tilde O(n^\rho/b_1)$, where $b_1$ is the similarity between close sets.

A third line of work has sought to derandomize LSH.
The result is so-called Las Vegas LSH~\cite{ahle2017optimal,wei2019optimal}.
Here the families $H$ are built combinatorially, rather than at random, to guarantee the data structure always return a near neighbour, when one exists.
While these methods don't have probabilities, they still end up with similar factors for similar reasons.

As mentioned, the reason $p_1^{-1}$ shows up in all these different approaches,
is that they all rely on the same amplification procedure, which has to be applied an integer number of times.
One might wonder if tree based methods, which do an adaptive amount of amplification, could get rid of the $1/p_1$ dependency.
However as evidenced by the classical and current work~\cite{bawa2005lsh,andoni2017lsh,christiani2018confirmation,christiani2019puffinn} these methods still have a factor $1/p_1$.
We leave it open whether this might be avoidable with better analysis,
perhaps inspired by the results in this paper.

\section{Preliminaries}

Before we give the new LSH algorithm, we will recap the traditional analysis.
For a more comprehensive introduction to LSH, see the Mining of Massive Datasets book~\cite{leskovec2020mining}, Chapter 3.
In the remainder of the article we will use the notation $[n]=\{1,\dots,n\}$.

Assume we are given a $(r_1, r_2, p_1, p_2)$-sensitive LSH family, $H$, as defined in the introduction.
Let $k$ and $L$ be some integers defined later,
and let $[m]$ be the range of the hash functions, $h\in H$.
Let $n$ be an upper bound on the number of points to be inserted.
\footnote{If we don't know how many points will be inserted, several black box reductions allow transforming LSH into a dynamic data structure.}
The Indyk-Motwani data-structure consists of $L$ hash tables, each with $m^k$ hash buckets.

To insert a point, $x$, we draw $L\cdot k$ functions from $H$, denoted by $(h_{i,j})_{i\in[L],j\in[k]}$.
In each table $i\in[L]$ we insert $x$ into the bucket keyed by $(h_{i,1}(x), h_{i,2}(x), \dots, h_{i,k}(x))$.
Given a query point $q$, the algorithm iterates over the $L$ tables and retrieves the data points hashed into the same buckets as $q$.
The process stops as soon as a point is found within distance $r_1$ from $q$.

The algorithm as described has the performance characteristics listed below.
Here we assume the hash functions can be sampled and evaluated in constant time.
If this is not the case, one can use the improvements discussed in the related work.
\begin{itemize}
   \item Query time: $O(L(k+np_2^k)) = O(n^\rho p_1^{-1}\log n)$.
   \item Space: $O(nL) = O(n^{1+\rho}p_1^{-1})$ plus the space to store the data points.
   \item Success probability $99\%$.
\end{itemize}
To get these bounds, we have defined $k=\lceil\frac{\log n}{\log 1/p_2}\rceil$
and
\begin{align}
   L=\lceil p_1^{-k}\rceil
  \le \exp\left(\log1/p_1 \cdot \lceil \tfrac{\log n}{\log1/p_2}\rceil\right) + 1
  \le n^\rho/p_1+1.
\end{align}

It's clear from this analysis that the $p_1^{-1}$ factor is only necessary when $\frac{\log n}{\log 1/p_2}$ is not an integer.
However in those cases it is clearly necessary, since there is no obvious way to make a non-integer number of function evaluations.
We also cannot round $k$ down instead of up, since the number of false positives would explode:
rounding down would result in a factor of $p_2^{-1}$ instead of $p_1^{-1}$ --- much worse.

\section{LSH with High-Low Tables}

The idea of the main algorithm is to create some LSH tables with $k$ rounded down, and some with $k$ rounded up.
We call those respectively ``high probability'' tables and ``low probability'' tables.
In short ``LSH with High-Low Tables''.

The main theorem is the following:

\begin{theorem}
   Let $H$ be a $(r_1, r_2, p_1, p_2)$-sensitive LSH family, and let $\rho = \frac{\log 1/p_1}{ \log 1/p_2}$.
   Assume $p_1>1/n$ and $p_2>1/n$.
   Then there exists a solution to the $(r_1, r_2)$-near neighbour problem with the following properties:
   \begin{itemize}
      \item Query time: $O(n^\rho p_1^{\rho-1}\log n)$.
      \item Space: $O(nL) = O(n^{1+\rho}p_1^{\rho-1})$ plus the space to store the data points.
      \item Success probability $99\%$.
   \end{itemize}
\end{theorem}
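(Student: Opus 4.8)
The plan is to replace the single amplification level $k=\ceil{\log n/\log(1/p_2)}$ by two adjacent integer levels and to choose how many tables to devote to each. Write $K=\frac{\log n}{\log 1/p_2}$ and let $\theta=K-\floor{K}\in[0,1)$ be its fractional part; the hypothesis $p_2>1/n$ guarantees $K>1$, so $k_h:=\floor{K}\ge 1$ and $k_\ell:=\ceil{K}=k_h+1$ are both sensible amplification levels. I would build $L_h$ ``high'' tables, each concatenating $k_h$ hash functions, and $L_\ell$ ``low'' tables, each concatenating $k_\ell$ functions. The two quantities driving everything are the collision probabilities of a genuine near pair, namely $p_1^{k_h}$ and $p_1^{k_\ell}=p_1\,p_1^{k_h}$, and the expected number of far points landing in the query's bucket. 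Since $p_2^{K}=1/n$ by the definition of $K$, the latter is $n\,p_2^{k_h}=p_2^{-\theta}$ per high table and $n\,p_2^{k_\ell}=p_2^{\,1-\theta}\le 1$ per low table: high tables are cheap to own but spill over many false positives, while low tables are essentially false-positive-free but cover the near neighbour with a much smaller probability each.

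Next I would fix $L_h$ and $L_\ell$ from the two standard requirements. For constant success probability it suffices that $L_h\,p_1^{k_h}+L_\ell\,p_1^{k_\ell}=\Omega(1)$, since the probability of missing the near neighbour in every table is at most $\exp(-L_h p_1^{k_h}-L_\ell p_1^{k_\ell})$. For the running time I would charge each high table its $O(k_h)+O(p_2^{-\theta})$ cost and each low table its $O(k_\ell)+O(1)$ cost, so the query time is $\tilde O\!\big(L_h\,p_2^{-\theta}+L_\ell\big)$ up to $\log n$ factors and the space is $O\!\big(n(L_h+L_\ell)\big)$. Using the identities $n^{\rho}=p_1^{-K}$ and $p_1=p_2^{\rho}$, the two pure choices $\big(L_h=\Theta(p_1^{-k_h}),\,L_\ell=0\big)$ and $\big(L_h=0,\,L_\ell=\Theta(p_1^{-k_\ell})\big)$ evaluate to query times $\tilde O(n^{\rho}p_2^{-\theta(1-\rho)})$ and $\tilde O(n^{\rho}p_1^{-(1-\theta)})$ respectively, and I would keep whichever is smaller for the realized $\theta$.

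The crux is then a one-parameter worst-case calculation over $\theta\in[0,1)$. The high-table time $n^{\rho}p_2^{-\theta(1-\rho)}$ increases in $\theta$ while the low-table time $n^{\rho}p_1^{-(1-\theta)}$ decreases in $\theta$, and, after substituting $p_1=p_2^{\rho}$, the two exponents cross exactly at $\theta=\rho$, where both equal $n^{\rho}p_2^{-\rho(1-\rho)}=n^{\rho}p_1^{\,\rho-1}$. Hence for every fractional part the better of the two table types already runs in time $\tilde O(n^{\rho}p_1^{\,\rho-1})$, and the same bookkeeping yields space $O(n^{1+\rho}p_1^{\,\rho-1})$; the assumption $p_1>1/n$ is precisely what makes this quantity sublinear. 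Keeping both table types populated with counts that interpolate between the two extremes gives the same guarantee and matches the ``High-Low'' description; by linearity of the success and cost constraints such a mixture can never beat the better pure choice, so the worst-case value $\theta=\rho$ is unaffected.

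I expect the main obstacle to be the precise cost accounting rather than any single hard inequality: one must justify that for high tables the $p_2^{-\theta}$ false-positive term dominates the $k_h=O(\log n)$ hashing term (and the reverse for low tables), verify that rounding $L_h,L_\ell$ up to integers and fixing the $\Omega(1)$ success constant perturb the bounds only by constant and $\log n$ factors, and carry the substitution $p_1=p_2^{\rho}$ through cleanly so that the crossover lands exactly at $\theta=\rho$. The qualitative content is small, but the whole factor-$n$ improvement lives in these constants in the exponent, so the estimates must be kept tight throughout.
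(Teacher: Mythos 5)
Your proposal is correct, and it reaches the theorem by a genuinely more elementary route than the paper. You share the paper's central idea --- using the two adjacent integer amplification levels $\lfloor\kappa\rfloor$ and $\lceil\kappa\rceil$ --- but where the paper populates both table types simultaneously, solving a $2\times 2$ linear system for the counts $(a,b)$ and then invoking two nontrivial appendix lemmas to bound $a+b$ (\cref{lem:alpha}, a log-concavity maximization over the fractional part, and \cref{lem:divrx}, the Kullback--Leibler bound $\exp(D(\rho\,\|\,x))\le p_1^{\rho-1}$), you compare the two \emph{pure} strategies and keep the better one. Your crossing computation checks out: with $\theta$ the fractional part and $p_1=p_2^\rho$, the pure-high cost $n^\rho p_2^{-\theta(1-\rho)}$ increases in $\theta$, the pure-low cost $n^\rho p_1^{-(1-\theta)}$ decreases, and they meet exactly at $\theta=\rho$ with common value $n^\rho p_1^{\rho-1}$, so the minimum is uniformly at most the theorem's bound; the success, rounding and space bookkeeping all go through as you sketch (one extra table from integer rounding costs $O(\log n)+p_2^{-\theta}\le O(\log n)+1/p_1\le O(n^\rho p_1^{\rho-1}\log n)$, using $n>1/p_1$ and $\theta\le\rho$ in the high regime). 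Your LP remark --- that with a linear cost and a single linear success constraint the optimum sits at a pure strategy --- is also sound, and in fact slightly undercuts the paper's practical recommendation to deploy the mixed allocation: choosing pure high or pure low according to whether $\theta\le\rho$ is already optimal in the linearized model, up to integrality. What the paper's heavier machinery buys is the closed divergence form, which shows the worst-case bound is tight up to a factor $2$ and quantifies the gain when $\kappa$ is near an integer; what yours buys is the same worst-case theorem with no calculus. Two trivia to patch: when $\kappa$ is an integer your $k_\ell=\lceil\kappa\rceil$ equals $k_h$ rather than $k_h+1$ (harmless, as both strategies then cost $n^\rho$), and your $\theta$ is $1-\alpha$ in the paper's notation, a pure reparametrization.
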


\begin{proof}
Assume $r_1, r_2, p_1, p_2$ are given.
Define $\rho = \frac{\log1/p_1}{\log1/p_2}$,
$\kappa=\frac{\log n}{\log1/p_2}$,
and $\alpha = \ceil\kappa - \kappa \in [0,1)$.
We build $\floor a + \ceil b$ tables (for $a,b\ge 0$ to be defined), where the first $\floor a$ use the hash function concatenated $\floor \kappa$ times as keys, and the remaining $\ceil b$ use it concatenated $\ceil \kappa$ times.

The total number of tables to query is then $\floor a + \ceil b$.
The expected total number of far points we have to retrieve is
\begin{align}
   n(\floor a p_2^{\floor\kappa} + \ceil b p_2^{\ceil\kappa})
   &= n(\floor a p_2^{\kappa-1+\alpha}+\ceil b p_2^{\kappa+\alpha})
 \\&= \floor a p_2^{-1+\alpha}+\ceil b p_2^{\alpha}
 \\&\le a p_2^{-1+\alpha}+ (b+1) p_2^{\alpha}
 \\&\le a p_2^{-1+\alpha}+ b p_2^{\alpha} + 1
 \label{eq:expected-collisions}
 .
\end{align}
For the second equality, we used the definition of $\kappa$: $p_2^\kappa=1/n$.
We only count the expected number of points seen that are at least $r_2$ away from the query.
This is because the algorithm, like classical LSH, terminates as soon as it sees a point with distance less than $r_2$.

Given any point in the database within distance $r_1$ we must be able to find it with high enough probability.
This requires that the query and the point shares a hash-bucket in one of the tables.
The probability that this is not the case is
\begin{align}
   (1-p_1^{\floor \kappa})^{\floor a}
   (1-p_1^{\ceil \kappa})^{\ceil b}
   &\le
   (1-p_1^{\floor \kappa})^{a-1}
   (1-p_1^{\ceil \kappa})^{b}
   \\&\le
   \exp(-a p_1^{\floor \kappa} - b p_1^{\ceil \kappa}) (1-p_1^{\floor \kappa})^{-1}
 \\&=
 \exp(- (a p_1^{-1+\alpha} + b p_1^{\alpha})n^{-\rho})
(1-p_1^{\floor \kappa})^{-1}
 \\&\le
 \exp(- (a p_1^{-1+\alpha} + b p_1^{\alpha})n^{-\rho})\cdot 2
.
\end{align}
For the equality, we used the definition of $\kappa$ and $\rho$: $p_1^\kappa = p_2^{\rho\kappa} = n^{-\rho}$.
For the last inequality we have assumed $p_2>1/n$ so $\floor \kappa\ge 1$,
and that $p_1<1/2$, since otherwise we could just get the theorem from the classical LSH algorithm.

\vspace{1em}

We now define $a, b$ such that 
$a p_2^{-1+\alpha}+ b p_2^{\alpha} =a+b$
and $a p_1^{-1+\alpha} + b p_1^{\alpha}=n^\rho$.
By the previous calculations this will guarantee the number of false positives is not more than the number of tables, and a constant success probability.

We can achieve this by taking
\begin{align}
   \begin{bmatrix}
      a\\b
   \end{bmatrix}
   =
   \begin{bmatrix}
      p_2^{-1+\alpha}-1 & p_2^{\alpha}-1
      \\
      p_1^{-1+\alpha} & p_1^{\alpha}
   \end{bmatrix}^{-1}
   \begin{bmatrix}
      0 \\ n^\rho
   \end{bmatrix}
   =
   \frac{n^\rho
      }{
      (p_2^{-1+\alpha}-1) p_1^{\alpha}
      +
      (1-p_2^{\alpha}) p_1^{-1+\alpha}
   }
      \begin{bmatrix}
         1-p_2^\alpha
         \\
         p_2^{-1+\alpha}-1
      \end{bmatrix}
      .
\end{align}
We note that both values are non-negative, since $\alpha\in[0,1]$.

When actually implementing the LSH algorithm width High-Low buckets, these are the values you should use for the number of respectively the high and low probability tables.
That will ensure you take full advantage of when $\alpha$ is not worst case, and you may do even better than the theorem assumes.

To complete the theorem we need to prove $a+b\le n^\rho p_1^{\rho-1}$.
For this we bound
\begin{align}
   \frac{a+b}{n^\rho}
   &= \frac{p_2^{-1+\alpha}-p_2^\alpha}{ (p_2^{-1+\alpha}-1) p_1^{\alpha} + (1-p_2^{\alpha}) p_1^{-1+\alpha} }
 \\&\le
   \left(\frac{(p_1-p_2)\log1/p_1}{(1-p_1)\log p_1/p_2}\right)^\rho
   \left(\frac{(1-p_2)\log p_1/p_2}{(p_1-p_2)\log1/p_2}\right)
   \label{eq:a-bound}
 \\&=
 \exp\left(D\left(\rho \,\bigg\|\, \frac{1/p_1-1}{1/p_2-1}\right)\right)
 \\&\le
 p_1^{\rho-1}
 .
\end{align}
Here $D(r \| x) = r \log\tfrac{r}{x} + (1-r)\log\tfrac{1-r}{1-x}$ is the Kullback-Leibler divergence.
The two inequalities are proven in the Appendix as \cref{lem:alpha} and \cref{lem:divrx}.
The first bound comes from maximizing over $\alpha$,
so in principle we might be able to do better if $\kappa=\frac{\log n}{\log 1/p_2}$ is close to an integer.
The second bound is harder, but the realization that the left hand side can be written on the form of a divergence helps a lot.
The bound is tight up to a factor 2, so no significant improvement is possible.

Finally we can boost the success probability from $1-2/e\approx 0.26$ to 99\% by repeating the entire data-structure 16 times.
\end{proof}

\bibliographystyle{splncs04}
\bibliography{p1}

\section{Appendix}
\appendix

\begin{lemma}\label{lem:alpha}
   For all $\alpha\in[0,1]$ we have
   \begin{align}
      f(\alpha)
   = \frac{p_2^{-1+\alpha}-p_2^\alpha}{ (p_2^{-1+\alpha}-1) p_1^{\alpha} + (1-p_2^{\alpha}) p_1^{-1+\alpha} }
   \le
   \left(\frac{(p_1-p_2)\log1/p_1}{(1-p_1)\log p_1/p_2}\right)^\rho
   \left(\frac{(1-p_2)\log p_1/p_2}{(p_1-p_2)\log1/p_2}\right)
   ,
   \end{align}
   where $\rho=\frac{\log1/p_1}{\log1/p_2}$.
\end{lemma}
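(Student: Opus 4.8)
The plan is to collapse the apparently two-variable quotient into a single-variable maximization that can be solved exactly. Since the claimed right-hand side does not depend on $\alpha$, it suffices to show it equals (or dominates) $\max_{\alpha} f(\alpha)$.

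First I would substitute $s = p_2^{\alpha-1}$. The crucial observation is that $p_1^{\alpha-1} = p_2^{\rho(\alpha-1)} = s^{\rho}$, since $\log p_1 = \rho\log p_2$. Under this substitution the numerator becomes $p_2^{\alpha-1}-p_2^{\alpha} = s(1-p_2)$, and the denominator factors as $s^{\rho}\br{(p_1-p_2)s + (1-p_1)}$. Hence
\begin{align}
   f(\alpha) = h(s) := \frac{(1-p_2)\,s^{1-\rho}}{(p_1-p_2)s + (1-p_1)}.
\end{align}
As $\alpha$ ranges over $[0,1]$, $s$ ranges over a subinterval of $(0,\infty)$, so it is enough to bound $h$ on all of $(0,\infty)$.

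Next I would locate the maximum of $h$. Because $1-\rho>0$ and $\rho>0$, we have $h(s)\to 0$ both as $s\to 0^+$ and as $s\to\infty$, while $h>0$ in between, so the maximum is attained at an interior stationary point. Setting the logarithmic derivative
\begin{align}
   \frac{h'(s)}{h(s)} = \frac{1-\rho}{s} - \frac{p_1-p_2}{(p_1-p_2)s+(1-p_1)}
\end{align}
to zero yields a \emph{linear} equation in $s$, whose unique positive solution is
\begin{align}
   s^\ast = \frac{(1-\rho)(1-p_1)}{\rho\,(p_1-p_2)} = \frac{(1-p_1)\log(p_1/p_2)}{(p_1-p_2)\log(1/p_1)}.
\end{align}

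Finally I would substitute $s^\ast$ back into $h$ and match the result to the stated bound. The convenient fact is that at $s^\ast$ the denominator collapses: from the stationarity relation $(p_1-p_2)s^\ast = \tfrac{1-\rho}{\rho}(1-p_1)$ one gets $(p_1-p_2)s^\ast + (1-p_1) = (1-p_1)/\rho$. Writing $(s^\ast)^{1-\rho} = s^\ast (s^\ast)^{-\rho}$ and recognizing $1/s^\ast$ as exactly the base of the $\rho$-th power in the claimed bound, the expression rearranges into $\br{(p_1-p_2)\log(1/p_1)/((1-p_1)\log(p_1/p_2))}^{\rho}\cdot (1-p_2)\log(p_1/p_2)/((p_1-p_2)\log(1/p_2))$, which is the desired right-hand side (in fact with equality). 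Since $s^\ast$ is the global maximizer over $(0,\infty)$, the constraint $\alpha\in[0,1]$ never needs to be invoked.

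There is no deep obstacle here, as the statement is essentially a calculus optimization; the real work is algebraic bookkeeping. I expect the two places needing care to be (i) spotting the substitution $p_1^{\alpha-1}=s^\rho$, which is what turns the messy quotient into the clean single-variable $h(s)$, and (ii) massaging $h(s^\ast)$ into the exact product-of-powers form, in particular splitting $(s^\ast)^{1-\rho}=s^\ast(s^\ast)^{-\rho}$ so that the $\rho$ and $1-\rho$ exponents separate correctly and the logarithmic factors $\log(1/p_1)$, $\log(1/p_2)$, $\log(p_1/p_2)$ land in their claimed positions.
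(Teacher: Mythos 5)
Your proof is correct, and it arrives at the same extremal point as the paper, but by a genuinely cleaner route. The paper works directly in the variable $\alpha$: since $f$ itself need not be concave, it first certifies log-concavity of $f$ via an explicit second-derivative computation, then computes $f'(\alpha)$, solves for the maximizing $\alpha$, and plugs in. Your substitution $s=p_2^{\alpha-1}$, with the key identity $p_1^{\alpha-1}=s^\rho$, collapses $f$ into $h(s)=\frac{(1-p_2)\,s^{1-\rho}}{(p_1-p_2)s+(1-p_1)}$, and your maximizer $s^\ast=\frac{(1-\rho)(1-p_1)}{\rho(p_1-p_2)}$ is exactly $p_2^{\alpha-1}$ evaluated at the paper's optimal $\alpha$, so the two proofs evaluate the same quantity. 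What your route buys: the stationarity equation becomes linear, the global-maximum certification needs no concavity computation at all (just that $h>0$ on $(0,\infty)$, $h\to 0$ at both ends since $\rho\in(0,1)$, and the critical point is unique), and the final evaluation --- via $(p_1-p_2)s^\ast+(1-p_1)=(1-p_1)/\rho$ and the split $(s^\ast)^{1-\rho}=s^\ast(s^\ast)^{-\rho}$, together with $\rho/\log(1/p_1)=1/\log(1/p_2)$ --- checks out and yields the right-hand side with equality, confirming the paper's remark that this bound is precisely the maximum over (relaxed) $\alpha$. The one thing the paper's parametrization keeps visible, and yours discards, is the constraint $\alpha\in[0,1]$ (equivalently $s\in[1,1/p_2]$): the paper notes one can do better when $\kappa$ is close to an integer, a refinement your relaxation to $s\in(0,\infty)$ gives up, though this is harmless for the stated inequality. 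The only implicit hypotheses you use --- $0<p_2<p_1<1$, hence $\rho\in(0,1)$ --- are the paper's standing assumptions.
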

\begin{proof}
   We first show that $f(\alpha)$ is log-concave, which implies it is maximized at the unique $\alpha$ such that $f'(\alpha)=0$.
   Log-concavity follows easily by noting
   \begin{align}
      \frac{d^2  \log f(\alpha)}{d\alpha^2}
      = 
      -\frac{(1-p_1)(p_1-p_2)p_2^{1+\alpha}(\log\tfrac1{p_2})^2}
      {((1-p_1)p_2+p_2^\alpha(p_1-p_2))^2}
      \le 0
      .
   \end{align}

   Meanwhile
   \begin{align}
      \frac{d f(\alpha)}{d\alpha}
      = 
      \frac{p_1(1-p_2)(p_2/p_1)^\alpha}
      {((1-p_1)p_2+p_2^\alpha(p_1-p_2))^2}
      \left[(p_1-p_2)p_2^\alpha \log\tfrac1{p_1}-(1-p_1)p_2\log\tfrac{p_2}{p_1}\right]
      ,
   \end{align}
   which implies $f(\alpha)$ is maximized in
   \begin{align}
      \alpha = \log\frac{(1-p_1)p_2\log\tfrac{p_2}{p_1}}{(p_1-p_2) \log\tfrac1{p_1}} \big/ \log p_2
      .
   \end{align}
   Plugging into $f$ yields the lemma.
\end{proof}
Note that $f(\alpha)$ is not regularly concave as $p_1$ and $p_2$ gets small enough.
Hence the use of log-concavity is necessary.

Next, we state a useful inequality, which is needed for the last proof.
\begin{lemma}\label{lem:p1r}
   Let $p,r\in(0,1)$, then
   \begin{align}
      1-\frac{1-p}{r} \le
      p^{1/r}\le \frac{p r}{1-p(1-r)}
      .
   \end{align}
\end{lemma}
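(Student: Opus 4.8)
The plan is to recognize that both inequalities are disguised forms of Bernoulli's inequality, $(1+x)^s \ge 1 + sx$, which holds for every exponent $s \ge 1$ and every $x \ge -1$. Since $r \in (0,1)$, the exponent $s = 1/r$ is always at least $1$, so this is precisely the regime where Bernoulli applies. The whole lemma should reduce to choosing the base correctly in each direction.

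First I would dispatch the lower bound. Writing the base as $p = 1 + (p-1)$ with $x = p-1 \in (-1,0)$ and $s = 1/r \ge 1$, Bernoulli immediately gives
\[
p^{1/r} = \left(1 + (p-1)\right)^{1/r} \ge 1 + \tfrac{1}{r}(p-1) = 1 - \tfrac{1-p}{r},
\]
which is the left inequality with no further work.

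For the upper bound I would first rewrite the denominator as $1 - p(1-r) = 1 - p + pr$ and note it is strictly positive, since $p(1-r) < 1$. Both sides of the claimed inequality are then positive, so I can pass to reciprocals: the claim becomes $p^{-1/r} \ge \tfrac{1-p+pr}{pr} = 1 + \tfrac{1-p}{pr}$. Setting $q = 1/p > 1$ turns this into $q^{1/r} \ge 1 + \tfrac{1}{r}(q-1)$, which is once more Bernoulli, this time applied to the base $q = 1 + (q-1)$ with $q - 1 > 0$ and exponent $1/r \ge 1$. Inverting back recovers the stated bound.

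I expect the only real obstacle to be spotting the reciprocal substitution for the upper bound: the raw inequality $p^{1/r} \le pr / (1 - p(1-r))$ does not resemble Bernoulli until one inverts it and replaces $p$ by $1/p$. Once both reductions are in place, the remainder is purely a matter of verifying the sign conditions $x \ge -1$ and $s \ge 1$, which hold throughout because $p \in (0,1)$ and $r \in (0,1)$, together with the positivity of the denominator noted above.
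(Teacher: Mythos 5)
Your proof is correct. The lower bound is in substance the same as the paper's: the paper computes $\frac{d^2}{dp^2}\,p^{1/r} = p^{1/r}(pr)^{-2}(1-r)\ge 0$ and observes that $1-\frac{1-p}{r}$ is the tangent line at $p=1$, and that tangent-line statement \emph{is} Bernoulli's inequality for exponent $1/r\ge 1$, so you have merely cited the named inequality where the paper re-derives it. The genuine divergence is in the upper bound. The paper proceeds by calculus: it sets $g(p)=p^{1/r}\big/\frac{pr}{1-p(1-r)}$, checks $g(1)=1$, and shows $g'(p)=p^{1/r}(pr)^{-2}(1-p)(1-r)\ge 0$, so $g\le 1$ on $(0,1)$. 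You instead observe that the denominator $1-p+pr$ is positive, take reciprocals, and substitute $q=1/p>1$ to get $q^{1/r}\ge 1+\frac{q-1}{r}$, which is Bernoulli once more. Your route buys a pleasing unification --- both bounds are the single inequality $(1+x)^{1/r}\ge 1+\frac{x}{r}$, applied at $x=p-1$ and at $x=\frac{1}{p}-1$, exposing a symmetry under $p\leftrightarrow 1/p$ that the paper's treatment hides --- and it requires no derivative computations at all. The paper's monotone-ratio argument is more mechanical but self-contained, reusing the same algebraic expressions that appear in its convexity check. Your sign bookkeeping is complete: $x=p-1\in(-1,0)$ and $x=q-1>0$ both satisfy $x\ge -1$, $1/r\ge 1$ holds since $r\in(0,1)$, and the positivity of $1-p(1-r)$ justifies the reciprocal step, so no gap remains.
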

\begin{proof}
   We have $\frac{d^2}{d p^2} p^{1/r} = p^{1/r} (p r)^{-2} (1-r)$, so $p^{1/r}$ is convex as a function of $p$.
   Since $1-\frac{1-p}{r}$ is it's tangent (at $p=1$) we get the first inequality.

   For the second inequality,
   define $g(p) = p^{1/r} / \frac{p r}{1-p(1-r)}$.
   Then $g(1) = 1$ and $g(p)$ is non-decreasing, since
   \begin{align}
      g'(p)
      = p^{1/r} (p r)^{-2} (1-p)(1-r)
      \ge 0
      .
   \end{align}
   This shows that for $p\le 1$ we have $p^{1/r} / \frac{p r}{1-p(1-r)} \le 1$,
   which is what we wanted to prove.
\end{proof}

\begin{lemma}\label{lem:divrx}
   Let $p,r\in(0,1]$
   and let $x= \frac{1/p\,-\,1}{1/p^{1/r}-1}$,
   then
   \begin{align}
      D(r \| x)
      \le
      r\log\tfrac{r}{x}
      \le (1-r)\log\tfrac{1}{p}
      ,
      \label{eq:Dbound}
   \end{align}
   where $D(r\|x)=r \log\tfrac{r}{x}+(1-r)\log\tfrac{1-r}{1-x}$.
\end{lemma}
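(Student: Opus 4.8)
The plan is to observe that both inequalities in \eqref{eq:Dbound} are, after elementary rearrangement, exactly the two halves of \cref{lem:p1r} applied with the same $p$ and $r$; so I would attempt nothing delicate, and the work is purely in massaging each side into the matching form. As a preliminary I would record that for $r\le 1$ we have $1/r\ge 1$ and $p\le 1$, hence $p^{1/r}\le p$, so $1/p^{1/r}-1\ge 1/p-1\ge 0$ and therefore $0\le x\le 1$. This justifies the divisions and logarithms taken below and guarantees $\log\tfrac{1-r}{1-x}$ is well defined.

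For the left inequality I would start from $D(r\|x)=r\log\tfrac rx+(1-r)\log\tfrac{1-r}{1-x}$ and note that $D(r\|x)\le r\log\tfrac rx$ is equivalent to $(1-r)\log\tfrac{1-r}{1-x}\le 0$, that is (for $r<1$; the case $r=1$ is an equality throughout) to $x\le r$. To establish $x\le r$ I would invoke the upper bound of \cref{lem:p1r}, namely $p^{1/r}\le\frac{pr}{1-p(1-r)}$: taking reciprocals and subtracting $1$ gives $1/p^{1/r}-1\ge\frac{1-p}{pr}=\tfrac1r(1/p-1)$, so that $x=\frac{1/p-1}{1/p^{1/r}-1}\le(1/p-1)/(\tfrac1r(1/p-1))=r$, as wanted.

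For the right inequality I would exponentiate. Dividing $r\log\tfrac rx\le(1-r)\log\tfrac1p$ by $r>0$ and exponentiating turns the claim into the equivalent algebraic statement $\tfrac rx\le p^{1-1/r}$. Writing $w=p^{-1/r}$, so that $p^{1-1/r}=pw$ and $\tfrac rx=\frac{r(w-1)}{1/p-1}$, and clearing the positive denominator $1/p-1$, this reduces after cancellation to $w\,(p+r-1)\le r$. Now I would use the lower bound of \cref{lem:p1r}, $1-\frac{1-p}r\le p^{1/r}$, rewritten as $\frac{p+r-1}{r}\le p^{1/r}=w^{-1}$; multiplying both sides by the positive quantity $rw$ gives exactly $w\,(p+r-1)\le r$, which closes the argument.

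The main obstacle is not any sharp estimate but spotting the two rearrangements — in particular recognising the right-hand inequality as $\tfrac rx\le p^{1-1/r}$, which after clearing denominators lands precisely on the tangent-line bound of \cref{lem:p1r}. Once these target forms are identified every step is a one-line manipulation, and the only remaining care is the boundary bookkeeping (the endpoint $r=1$ and the check that $x<1$ so the divergence terms are finite).
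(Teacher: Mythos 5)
Your proof is correct and takes essentially the same route as the paper's: you use the upper bound of \cref{lem:p1r} to get $x\le r$, which makes the second divergence term non-positive, and you reduce the right-hand inequality by rearrangement to exactly the tangent-line (lower) bound of \cref{lem:p1r}. The only difference is that you spell out the algebraic manipulations (via $w=p^{-1/r}$) that the paper leaves as ``after simple manipulations.''
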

\begin{proof}
   Using the upper bound of \cref{lem:p1r} it follows directly that $x\in(0,r)$.
   This suffices to show the first inequality of \eqref{eq:Dbound}, since for $x\le r$ we have $\frac{1-r}{1-x}\le 1$ and so the second term of $D(r \| x)$ is non-positive.

   \vspace{.5em}

   For the second inequality,
   we note that it is equivalent after manipulations to $x \ge r p^{1/r-1}$.
   Plugging in $x$, and after more simple manipulations,
   that's equivalent in the range to $p^{1/r}\ge1-\frac{1-p}{r}$,
   which is \cref{lem:p1r}.

   This finishes the proof of \cref{lem:divrx}.
\end{proof}
It's somewhat surprising that the last argument in the proof of \cref{lem:divrx} works,
since if we had plugged the lower bound from \cref{lem:p1r} directly into the problem we would have had
\begin{align}
   r\log\tfrac{r}{x}
   \le r\log\tfrac{p r}{p+r-1}
   ,
\end{align}
which is much weaker than what we prove, and not even defined for $p+r<1$.
\end{document}